\documentclass[pra,floatfix,amsmath,twocolumn]{revtex4}
\usepackage{amssymb}
\usepackage{graphicx}
\usepackage{graphics}
\usepackage{amsmath}
\usepackage{amsthm}
\usepackage{color}

\def\x{\mathcal{X}}
\def\y{\mathcal{Y}}

\def\b{\mathcal{B}}

\newcommand{\bea}{\begin{eqnarray}}
\newcommand{\eea}{\end{eqnarray}}

\def\bi{\begin{itemize}}
\def\ei{\end{itemize}}
\def\bc{\begin{center}}
\def\ec{\end{center}}

\def\C{\hbox{$\mit I$\kern-.7em$\mit C$}}
\def\R{\hbox{$\mit I$\kern-.6em$\mit R$}}

\newcommand{\one}{\mbox{$1 \hspace{-1.0mm}  {\bf l}$}}
\def\tr{\mathrm{tr}}

\def\c{\mathcal{C}}
\def\q{\mathcal{Q}}

\def\P{\textbf{P}}

\def\supp{\textrm{supp\,}}


\newtheorem{theorem}{Theorem}

\newtheorem{lemma}[theorem]{Lemma}

\begin{document}

\title{A general bound for the dimension of quantum behaviours in the prepare-and-measure scenario}
\author{Julio I. de Vicente} \email{jdvicent@math.uc3m.es}
\affiliation{Departamento de Matem\'aticas, Universidad Carlos III de
Madrid, Avda. de la Universidad 30, E-28911, Legan\'es (Madrid), Spain}

\begin{abstract}
The prepare-and-measure scenario offers the possibility to infer the dimension of an unknown physical system in a device-independent way, i.e.\ using only raw measurement data with apparatuses regarded as black boxes. We provide here a general lower bound on the dimension necessary to observe arbitrary quantum behaviours in this scenario based on simple matrix analysis. This bound holds even if the preparer and measurer share randomness. This is relevant in scenarios were the parties are free to access this resource or it is not safe to assume that the devices are not correlated. We further use this result to bound the success probability of random access codes in general as a function of the dimension of the quantum systems sent from one party to another and we provide constructions of dimension witnesses.
\end{abstract}

\maketitle

\section{Introduction}

The device independent approach in quantum information theory allows one to infer physical properties of systems and to implement protocols based solely on the observed statistics, i.e.\ without making any assumption on the underlying states and how they interact with the measurement apparatuses. Different tasks that can be implemented in this way include quantum key distribution, randomness generation and amplification, genuine multipartite entanglement certification and self-testing of states and measurements \cite{reviewnl}. It has been also observed \cite{bellwit,gallego} that the underlying dimension of an uncharacterized (classical or quantum) physical system can be tested in this way, i.e.\ using only the observed probabilities of obtaining certain outcomes conditioned on implementing different uncharacterized measurements. These procedures are referred to as device-independent dimension witnessing (DIDW) and are the object of active current investigation. On the one hand, from a foundational perspective, DIDW allows to estimate the degrees of freedom of a system without a priori including this information in the physical model. On the other hand, from the point of view of applications, quantum information tasks can be more efficiently implemented the larger the dimensions of the quantum systems one can prepare and control. Thus, dimension is regarded as a valuable resource in this context and DIDW provides experimental means to test it. In fact, DIDW can be regarded as a primitive for semi-device-independent protocols, which make no assumption on the inner functioning of devices and physical systems except for bounds on the underlying dimension \cite{semi}. In general, DIDW is deeply rooted in the field of quantum communication complexity, which studies the necessary amount of communication different parties must exchange (as measured by the dimensionality of the physical systems being sent) in order to implement distributed computations \cite{reviewcc}.

DIDW was introduced in the Bell scenario in which two parties share an entangled state \cite{bellwit}. Soon after, Ref.\ \cite{gallego} presented an alternative scenario, the so-called prepare-and-measure, which consists of two devices: one that prepares and sends states and one that measures them. This setting is simpler in the sense that it does not require entanglement nor multicomponent systems and this proposal has been already verified in experiments \cite{experiments}. Reference \cite{gallego} analyzed the mathematical structure of the set of possible behaviours to be observed in this scenario depending on the dimension and provided explicit constructions of functionals acting there, known as dimension witnesses, whose values provide lower bounds on the classical and quantum dimension. Although other constructions of dimension witnesses have appeared in subsequent works \cite{dallarno,didwpm,didwseveral}, this approach suffers from two difficulties. First, these functionals are usually tailor-made to detect specific behaviours one targets at. Second, the explicit corresponding bounds are very case-dependent and difficult to find in general. Thus, general bounds to constrain the dimension of arbitrary quantum behaviours are of great use in this context. In this sense, it is particularly worth mentioning the work of Ref.\ \cite{num}, which offers a powerful and general numerical approach based on semidefinite programming that makes it possible to obtain such bounds for a given dimension witness. However, this approach is bound to problems that can be tackled numerically. Analytical results in this direction not only enable a better understanding of the mathematical structure of dimension-constrained behaviours but also to obtain results for problems that go beyond computational efficiency such as the asymptotic scenario, to devise new dimension witnesses or to consider nonlinear constraints. In this sense, general lower bounds on the dimension can be found in \cite{brunner,sikora,yo}. Notwithstanding, all these analytical approaches assume that the preparer and measurer devices are not correlated, i.e.\ that they do not share a random variable. Although this is justified in certain scenarios, there are others in which this assumption is not admissible. This is particularly the case when the devices are not trusted. Suppose, for instance, that the parties want to verify that devices provided by a manufacturer, which are regarded by them as black boxes, operate on quantum systems of a given dimension. Then, malicious providers could fake higher-dimensional behaviours by mixing lower-dimensional preparations using shared randomness if this is not taken into account. Another example to consider this scenario is when the parties have to use their devices to implement a particular task and the constraints of the problem allow them to use shared randomness as a resource. In fact, it is known that the availability or not of shared randomness can have drastic consequences in what comes to the necessary dimension underlying a given observed behaviour; for instance, without this resource at disposal almost all behaviours are high-dimensional while, when it is given, low-dimension behaviours are no longer negligible \cite{yo}. The main goal of the present work is to close this gap and provide general analytical conditions to test the dimension of quantum behaviours in the prepare-and-measure scenario even when devices might share randomness. After presenting our notation and definitions in Sec.\ II, in Sec.\ III we provide a general lower bound on the dimension of a quantum behaviour based on simple matrix analysis techniques, which is valid when preparations can be mixed either because shared randomness is available as a resource or because it is not safe to assume that the devices are uncorrelated. To further illustrate its usefulness, we provide two applications of this result in Sec.\ IV: we obtain bounds on the efficiency of a communication-complexity protocol known as random access coding and we provide improved constructions of dimension witnesses. We finish in Sec.\ V with some concluding remarks.

\section{Notation and definitions}

The prepare-and-measure scenario for DIDW \cite{gallego,dallarno} is composed by two parties: the preparer, Alice (or A), and the measurer, Bob (or B). A and B receive respectively inputs $x$ and $y$ from finite alphabets $\x$ and $\y$. They can only communicate by A sending a classical or quantum physical system to B depending on her input $x$. B can then perform a measurement on the system he receives depending on his input $y$. Using the outcome of this measurement together with all previous information held by him, B produces then an output $b$, which also takes values from a finite alphabet $\b$. Moreover, as explained in the introduction, we assume that the devices held by A and B may be correlated. That is, both parties have access to a common random variable, whose value determines the strategy to be followed from a pre-established list available to them. The main object in this scenario is the conditional probabilities with which each output occurs for any given pair of inputs: $P(b|xy)$. We will refer to this object as behaviour and it will be denoted by $\P$. Behaviours are lists of real numbers characterized by $P(b|xy)\geq0$ $\forall b,x,y$ and $\sum_bP(b|xy)=1$ $\forall x,y$ due to the fact that they are a collection of conditional probability distributions.

Suppose now that an observer can monitor sufficient repetitions of this process so as to infer the corresponding behaviour but has no information about the systems sent by Alice and the measurements implemented by Bob nor about any details of the strategy the parties use to determine the output corresponding to the different possible inputs. The task DIDW aims at is to determine the minimal amount of classical or quantum communication (as quantified by the dimension of the systems sent from A to B) that is compatible with the observed behaviour. In order to give a rigorous definition of this quantity we distinguish between the cases in which A sends classical or quantum states. In the first case, A will send a message $m(x)\in[d]=\{1,\ldots,d\}$, and the number of dits $d$ necessary to construct it quantifies the amount of classical communication. The availability of shared randomness boils down to the fact that the parties can prepare any convex combination of strategies using messages of dimension less than or equal to $d$. We denote the set of all such behaviours by $\c_d$ (this set and the analogous for the quantum case to be defined below depend on $|\x|$, $|\y|$ and $|\b|$, but we do not make this explicit in order to ease the notation as these quantities should be in general clear from the context). Notice that the availability of shared randomness imposes that the set $\c_d$ is convex. Furthermore, it can be seen that this set is actually a convex polytope \cite{gallego,dallarno}: it is the convex hull of a finite number of behaviours $\{P^D_i\}$, which we call deterministic. These have the structure
\begin{equation}\label{setc}
P^D(b|xy)=\sum_{m=1}^{d}s(m|x)t(b|my),
\end{equation}
where $s(m|x)$ codifies the conditional probability with which A sends the message $m$ given $x$, and $t(b|ym)$ the conditional probability with which B outputs $b$ given $y$ and the reception of $m$. The deterministic condition amounts to the fact that $s(m|x)=\delta_{m,f(x)}$ and $t(b|my)=\delta_{b,g(m,y)}$ with arbitrary functions $f:\x\to[d]$ and $g:[d]\times\y\to\b$. Considering all possible choices for these functions gives rise to the finite list $\{P^D_i\}$.

In the quantum case A sends quantum states $\rho_x$. The dimension of her message is thus
\begin{equation}\label{quantumdim}
d=\dim \sum_x\supp\rho_x,
\end{equation}
where $\supp$ stands for the support of an operator. In order to produce his output, B interacts with the state he receives by choosing an arbitrary quantum measurement conditioned on his input. Thus, the set $\q_d$ of behaviours achievable by sending quantum states of dimension at most $d$ is given by the convex hull of all behaviours $\P$ that take this form: there exists measurements $\{\Pi_b^y\geq0\}$ with $\sum_b\Pi_b^y=\one$ $\forall y$, such that
\begin{equation}\label{setq}
P(b|xy)=\tr(\rho_x\Pi_b^y)
\end{equation}
where the $\{\rho_x\}$ are of dimension less than or equal to $d$ as given by Eq.\ (\ref{quantumdim}).

One can readily find that $\c_{|\x|}=\q_{|\x|}$, which constitute the set of all behaviours in a given setting. This is because if $d=|\x|$, A can transmit to B the value of her input through her message. Therefore, given any observed behaviour $\P$ there always exist minimal values of $1\leq d\leq|\x|$ and $1\leq d'\leq|\x|$ such that $\P\in\c_d$ and $\P\in\q_{d'}$ (notice that in general $d'\leq d$ since it is straightforward to see that for any fixed value of $d$, it holds that $\c_d\subseteq\q_d$). To determine them is precisely the goal of DIDW. The fact that the sets $\c_d$ are polytopes provides techniques to bound the classical dimension necessary to observe a given behaviour \cite{gallego}. However, the quantum case is much harder to deal with. In the following we provide such a bound in terms of a simple function of the behaviour.

\section{Main result}

We will arrange the array of numbers given by $\P$ into a matrix $P\in\mathbb{R}^{|\x|\times|\y||\b|}$ according to the rule
\begin{equation}\label{behavior}
P=\sum_{bxy}P(b|xy)|x\rangle\langle yb|,
\end{equation}
where in the standard notation of quantum mechanics $|yb\rangle=|y\rangle\otimes|b\rangle$ and $\{|y\rangle\}$ denotes the computational basis of $\mathbb{R}^{|\y|}$ and similarly for the other alphabet elements. 
We will consider different Schatten norms for matrices:
\begin{equation}\label{schatten}
||A||_p=\left(\sum_i\sigma_i^p(A)\right)^{1/p}\quad(1\leq p\leq\infty),
\end{equation}
where $\{\sigma_i(A)\}$ are the singular values of the matrix $A$. Finally, we will denote the standard Hilbert-Schmidt inner product of matrices by
\begin{equation}
\langle P,G\rangle=\tr(PG^T)=\sum_{bxy}P(b|xy)G(b|xy),
\end{equation}
where for an arbitrary collection of $|\x||\y||\b|$ real numbers $G(b|xy)$, we define the matrix $G$ following the same prescription as in Eq.\ (\ref{behavior}).

\begin{theorem}
In any prepare-and-measure scenario $(|\x|,|\y|,|\b|)$, if $\P\in\q_d$ then
\begin{equation}
d\geq\frac{||P||_1^2}{|\x||\y|}.
\end{equation}
\end{theorem}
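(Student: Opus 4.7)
The plan is to factorise the behaviour matrix $P$ from Eq.\ (\ref{behavior}) as a product $P=A^T B$ of a ``preparation'' matrix and a ``measurement'' matrix built over a $d^2$-dimensional auxiliary space, and then extract the estimate by means of the H\"older-type Schatten inequality $||XY||_1\leq ||X||_2||Y||_2$. Concretely, I would fix an orthonormal basis of the real vector space of $d\times d$ Hermitian matrices with respect to the Hilbert-Schmidt inner product $\langle M,N\rangle=\tr(MN)$, identify each state $\rho_x$ and each POVM element $\Pi_b^y$ with a real vector in $\mathbb{R}^{d^2}$, and stack these vectors as columns of matrices $A\in\mathbb{R}^{d^2\times|\x|}$ and $B\in\mathbb{R}^{d^2\times|\y||\b|}$. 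Because the Hilbert-Schmidt inner product is real on pairs of Hermitian matrices, the entries of $A^T B$ reproduce exactly $\tr(\rho_x\Pi_b^y)=P(b|xy)$, yielding $P=A^T B$ in the arrangement of Eq.\ (\ref{behavior}).

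With this factorisation in hand, the argument reduces to bounding the Frobenius norms of the two factors separately. On the preparation side, the elementary inequality $\rho_x^2\leq\rho_x$ (valid for any density operator, whose spectrum lies in $[0,1]$) gives $||A||_2^2=\sum_x\tr(\rho_x^2)\leq\sum_x\tr(\rho_x)=|\x|$. On the measurement side, $(\Pi_b^y)^2\leq\Pi_b^y$ combined with the POVM completeness relation $\sum_b\Pi_b^y=\one$ gives $||B||_2^2=\sum_{by}\tr((\Pi_b^y)^2)\leq\sum_y\tr(\one)=d|\y|$. Feeding these two bounds into the H\"older-type inequality produces $||P||_1\leq\sqrt{d|\x||\y|}$, which rearranges to the claimed bound on $d$ for every behaviour of the form (\ref{setq}).

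Finally, since $\q_d$ is by definition the convex hull of such behaviours, an arbitrary $\P\in\q_d$ corresponds to $P=\sum_\lambda p(\lambda)P_\lambda$ with each $P_\lambda$ coming from a fixed preparation-and-measurement strategy of dimension at most $d$; the triangle inequality for the Schatten 1-norm together with the strategy-independence of the bound $\sqrt{d|\x||\y|}$ transfers the estimate to the mixture, covering in one stroke the case where shared randomness is available or the devices are otherwise correlated. I expect the main obstacle to be locating the right factorisation, namely viewing states and effects as vectors in the real $d^2$-dimensional Hilbert-Schmidt space so that $P=A^T B$ with Frobenius-controllable factors; once this is in place, the remainder is a short chain of elementary operator inequalities that is insensitive to whether the preparer and measurer share randomness.
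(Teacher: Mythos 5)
Your proof is correct and follows essentially the same route as the paper: both factorize $P$ into a preparation factor and a measurement factor, apply the Schatten--H\"older (Cauchy--Schwarz) inequality $\|XY\|_1\le\|X\|_2\|Y\|_2$, bound the squared Frobenius norms by $\sum_x\tr(\rho_x^2)\le|\x|$ and $\sum_{by}\tr[(\Pi_b^y)^2]\le d|\y|$, and dispose of shared randomness via the triangle inequality for $\|\cdot\|_1$. The only cosmetic difference is that you obtain $P=A^TB$ directly by vectorizing states and effects in the real Hilbert--Schmidt space $\mathbb{R}^{d^2}$, whereas the paper lifts $P$ to a tensor-product matrix $Z$ with $P=\tr_{\mathbb{C}^{d\times d}}Z$ and uses monotonicity of the trace norm under partial trace before the same H\"older step; your version slightly streamlines that intermediate move.
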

\begin{proof}
Due to the triangle inequality, the maximal value of $||P||_1$ in $\q_d$ must correspond to behaviours of the form given by Eq.\ (\ref{setq}) and $\rho_x\in\mathbb{C}^{d\times d}$ $\forall x$, i.e.\ shared randomness can be ignored. Defining the matrix
\begin{equation}
Z=\sum_{bxy}|x\rangle\langle yb|\otimes\rho_x\Pi_b^y\in\mathbb{R}^{|\x|\times|\y||\b|}\otimes\mathbb{C}^{d\times d},
\end{equation}
we have that $P=\tr_{\mathbb{C}^{d\times d}} Z$. Thus, since the trace norm cannot increase by partial tracing, one arrives at
\begin{equation}
||P||_1\leq||Z||_1\leq\left|\left|\sum_x|x\rangle\otimes\rho_x\right|\right|_2\left|\left|\sum_{by}\langle yb|\otimes\Pi_b^y\right|\right|_2,
\end{equation}
where in the last step we have used a particular case of H\"{o}lder's inequality for Schatten norms (see e.g.\ \cite{bhatia}). The result follows by noticing that
\begin{equation}
\left|\left|\sum_x|x\rangle\otimes\rho_x\right|\right|_2\leq\sum_x\left|\left||x\rangle\otimes\rho_x\right|\right|_2=\sum_x\tr(\rho_x^2)\leq|\x|,
\end{equation}
where we have used that $\tr(\rho_x^2)\leq1$ $\forall x$, and
\begin{align}
\left|\left|\sum_{by}\langle yb|\otimes\Pi_b^y\right|\right|_2&\leq\sum_{by}\left|\left|\langle yb|\otimes\Pi_b^y\right|\right|_2=\sum_{by}\tr[(\Pi_b^y)^2]\nonumber\\
&\leq\sum_{by}\tr(\Pi_b^y)=d|\y|,
\end{align}
where we have used that $0\leq\Pi_b^y\leq\one$ $\forall b,y$.
\end{proof}

\begin{theorem}
In any prepare-and-measure scenario $(|\x|,|\y|,|\b|)$ and for every matrix $G\in\mathbb{R}^{|\x|\times|\y||\b|}$, if $\P\in\q_d$ then
\begin{equation}
\langle P,G\rangle\leq||G||_\infty\sqrt{d|\x||\y|}.
\end{equation}
\end{theorem}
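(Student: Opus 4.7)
The plan is to derive the bound as a short corollary of Theorem~1 by invoking the matrix version of H\"older's inequality for Schatten norms. First I would observe that the pairing $\langle P,G\rangle=\tr(PG^T)$ is precisely the Hilbert--Schmidt inner product of the two matrices, so the $\|\cdot\|_1$/$\|\cdot\|_\infty$ duality gives
\begin{equation}
\langle P,G\rangle=\tr(PG^T)\leq\|P\|_1\,\|G^T\|_\infty=\|P\|_1\,\|G\|_\infty,
\end{equation}
where in the last equality I use that singular values are invariant under transposition. This is the same H\"older-type ingredient that was already cited as \cite{bhatia} in the proof of Theorem~1, only applied with the conjugate exponents $(1,\infty)$ instead of $(2,2)$.

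Next I would simply plug in the bound $\|P\|_1\leq\sqrt{d|\x||\y|}$ established in Theorem~1 for any $\P\in\q_d$. Combining the two inequalities yields $\langle P,G\rangle\leq\|G\|_\infty\sqrt{d|\x||\y|}$ directly.

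I do not anticipate any real obstacle: the handling of shared randomness (convex combinations of pure prepare-and-measure strategies) was already absorbed into Theorem~1 via the triangle inequality used there, so no additional convexity step is required here. The only item worth double-checking is that the coefficient-wise pairing $\sum_{bxy}P(b|xy)G(b|xy)$ defined in the excerpt genuinely coincides with the matrix inner product $\tr(PG^T)$ under the encoding in Eq.~(\ref{behavior}); this is immediate from the orthonormality of the computational basis. Hence the statement reduces to a one-line corollary of Theorem~1.
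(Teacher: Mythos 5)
Your proposal is correct and coincides with the paper's own argument: the paper likewise deduces Theorem~2 from Theorem~1 via the H\"older duality $\langle A,B\rangle\leq\|A\|_\infty\|B\|_1$ for Schatten norms. Your additional remarks (transposition invariance of singular values, shared randomness already handled in Theorem~1) are accurate elaborations of the same one-line derivation.
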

\begin{proof}
This follows readily from Theorem 1 by another particular case of H\"older's inequality for Schatten norms ($\langle A,B\rangle\leq||A||_\infty||B||_1$).
\end{proof}

It should be noticed that Theorems 1 and 2 are equivalent since the former can also be deduced from the latter by noticing that $||P||_1=\max_U\tr(PU)$ where the maximization is over all partial isometries in $\mathbb{R}^{|\y||\b|\times|\x|}$ (and, hence, $||U||_\infty=||U^T||_\infty=1$) \cite{HJ2}. Theorem 1 provides a directly checkable condition that allows to lower bound the quantum dimension necessary to observe any given behaviour without the need of a clever choice for the matrix $G$. However, any linear functional acting on the set of behaviours takes the form $\langle P,G\rangle$ for some $G$ and, thus, Theorem 2 is also of interest since, among other applications that we shall discuss in the next section, it provides upper bounds within $\q_d$ for the so-called dimension witnesses. 

Before considering these applications, let us first discuss the attainability of the bound given in Theorem 1. It turns out that the bound cannot be improved in general since there exist scenarios in which it is sharp $\forall d$. In particular, it suffices to consider deterministic behaviours in $\c_d$. For this, let $|\x|=dn$, $|\y|=m$ and $|\b|=d$ for any $d,m,n\in\mathbb{N}$ and let us introduce the notation $\textbf{1}_n$ and $\textbf{0}_n$ for the vectors in $\mathbb{R}^n$ that have all entries equal to 1 and 0 respectively and $e_i^{(n)}$ for the vector of $\mathbb{R}^n$ that has zeroes everywhere except a 1 in the $i$th entry. Take then the behaviour $\P$ with matrix
\begin{align}
P&=\left(
    \begin{array}{c}
      \textbf{1}_n \\
      \textbf{0}_n \\
      \textbf{0}_n \\
      \vdots \\
      \textbf{0}_n \\
    \end{array}
  \right)\left(
           \begin{array}{cccc}
             (e_1^{(d)})^T & (e_1^{(d)})^T & \cdots & (e_1^{(d)})^T \\
           \end{array}
         \right)\nonumber\\&+\left(
    \begin{array}{c}
      \textbf{0}_n \\
      \textbf{1}_n \\
      \textbf{0}_n \\
      \vdots \\
      \textbf{0}_n \\
    \end{array}
  \right)\left(
           \begin{array}{cccc}
             (e_2^{(d)})^T & (e_2^{(d)})^T & \cdots & (e_2^{(d)})^T \\
           \end{array}
         \right)\nonumber\\&+\cdots+\left(
    \begin{array}{c}
      \textbf{0}_n \\
      \textbf{0}_n \\
      \vdots \\
      \textbf{0}_n \\
      \textbf{1}_n \\
    \end{array}
  \right)\left(
           \begin{array}{cccc}
             (e_d^{(d)})^T & (e_d^{(d)})^T & \cdots & (e_d^{(d)})^T \\
           \end{array}
         \right),
\end{align}
where the column (row) vectors belong to $\mathbb{R}^{dn}$ ($\mathbb{R}^{dm}$). It then follows that $||P||_1=d\sqrt{nm}=\sqrt{d|\x||\y|}$ and that $\P\in\c_d$ (and, hence, $\P\in\q_d$). To see the first claim notice that $P=\sum_{i=1}^d\sqrt{nm}|u_i\rangle\langle v_i|$ where the $\{|u_i\rangle\}$ and $\{|v_i\rangle\}$ are sets of orthonormal vectors. To see the second claim, notice that $P(b|xy)$ takes the form (\ref{setc}) with $s(m|x)=\delta_{m,\lceil x/d\rceil}$ and $t(b|my)=\delta_{bm}$.

It is worth remarking that, despite the above example, not all deterministic behaviours in $\c_d$ attain the bound, i.e.\ it can be easily checked that there exist such instances where $||P||_1<\sqrt{d|\x||\y|}$. Notice, however, that this does not imply that the estimation given by Theorem 1 is not optimal as it may happen that $d\neq||P||_1^2/(|\x||\y|)$ but $d=\lceil||P||_1^2/(|\x||\y|)\rceil$.

More interestingly, as we show in the next section, we can also prove that there exist $\P\in\q_d$ such that $\P\notin\c_d$ for which $||P||_1=\sqrt{d|\x||\y|}$. Certain quantum random access codes or the behaviours considered in \cite{didwpm} provide such examples.

\section{Applications}

\subsection{Quantum random access codes}

As mentioned in the introduction, DIDW is closely related to the field of communication complexity in the setting of one-way communication complexity. Here, one asks what the minimal dimension of the (classical or quantum) messages from A to B must be in order for B to compute a given function $f(x,y): \x\times\y\to\b$ with a certain degree of success. The figure of merit which is usually considered here is the worst-case probability
\begin{equation}
p_w=\min\{P(b|xy):f(x,y)=b\}.
\end{equation}
A particular instance of this problem that has received quite some attention in the literature is random access coding \cite{rac}. Although several particular versions of this protocol have been considered, here we take the most general form in which A receives a string $x=x_1\cdots x_n$ where $x_i\in\{1,\ldots,m\}$ and B receives an input $y\in\{1,\ldots,n\}$ with the goal that $f(x,y)=x_y$ (thus $|\x|=m^n$, $|\y|=n$ and $|\b|=m$). It has been shown in \cite{ozols} that if A and B have access to shared randomness, then for the corresponding optimal strategies $p_w$ equals the average success probability
\begin{equation}
p=\frac{1}{nm^n}\sum_{f(x,y)=b}P(b|xy).
\end{equation}
Thus, the probability of success of any $(m,n)$ quantum random access code (QRAC) with communication cost $d$ can be written as $\langle P,G(m,n)\rangle$ with $\P\in\q_d$ and Theorem 2 can be applied to upper bound $p$ as a function of $d$. Here, $G(m,n)=F(m,n)/(nm^n)\in\mathbb{R}^{m^n\times mn}$, where
\begin{equation}
F(m,n)=\sum_{b,x,y}\delta_{b,x_y}|x\rangle\langle yb|.
\end{equation}
It might also be helpful to have in mind the following inductive construction of this matrix,
\begin{equation}
F(m,n)=\left(
         \begin{array}{ccccc}
           \textbf{1}_{m^{n-1}} & \textbf{0}_{m^{n-1}} & \cdots & \textbf{0}_{m^{n-1}} & F(m,n-1) \\
           \textbf{0}_{m^{n-1}} & \textbf{1}_{m^{n-1}} & \cdots & \textbf{0}_{m^{n-1}} & F(m,n-1) \\
           \vdots & \vdots & \ddots & \vdots & \vdots \\
           \textbf{0}_{m^{n-1}} & \cdots & \textbf{0}_{m^{n-1}} & \textbf{1}_{m^{n-1}} & F(m,n-1) \\
         \end{array}
       \right)
\end{equation}
with $F(m,1)=\one_m$.

Before proceeding to establish the bound it should be stressed that the application of Theorem 2 is not completely straightforward as it allows for a certain form of optimization. This is because there exist different choices of matrix $G$ to codify the same function $f$ due to the fact that $\sum_bP(b|xy)=1$ $\forall x,y$. Indeed, denoting by $\{A_{xy}\}$ the matrices
\begin{equation}
A_{xy}=\sum_b|x\rangle\langle yb|,
\end{equation}
we have that $\langle P,A_{xy}\rangle=1$ $\forall x,y$ and for every behaviour $\P$. Thus, $\forall\P\in\q_d$ it holds that
\begin{equation}
\langle P,G\rangle\leq||G+\sum_{xy}\alpha_{xy}A_{xy}||_\infty\sqrt{d|\x||\y|}-\sum_{xy}\alpha_{xy}
\end{equation}
for any choice of real numbers $\{\alpha_{xy}\}$. Interestingly, these norms are sensitive to the constraint fulfilled by behaviours and they can lead to different bounds. In our case, it seems that best results are obtained when the matrix $G+\sum_{xy}\alpha_{xy}A_{xy}$ is chosen to be a partial isometry. In the following we use the notation $\textbf{1}(m,n)$ for the $m\times n$ matrix with all entries equal to one.

\begin{lemma}
Let
\begin{equation}\label{eqlemma3}
H=\frac{1}{\sqrt{m^{n-1}}}(F(m,n)-a_{mn}\textbf{1}(m^n,mn)),
\end{equation}
with
\begin{equation}
a_{mn}=\frac{1}{m}-\frac{1}{m\sqrt{n}}.
\end{equation}
Then, $H$ is a partial isometry (and, hence, $||H||_\infty=1$).
\end{lemma}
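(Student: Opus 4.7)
The plan is to verify that $H$ is a partial isometry by showing that $H^T H$ is an orthogonal projection; this is equivalent to every nonzero singular value of $H$ being $1$, and so forces $\|H\|_\infty = 1$. Since $H$ is much taller than it is wide (the row index runs over $[m]^n$ while the column index runs over $[n]\times[m]$), working with $H^T H\in\mathbb{R}^{mn\times mn}$ is the natural route.

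I would begin by computing $F(m,n)^T F(m,n)$ combinatorially. Indexing rows and columns of the product by pairs $(y,b)$, the $((y,b),(y',b'))$ entry counts strings $x\in[m]^n$ with $x_y=b$ and $x_{y'}=b'$: it equals $m^{n-1}\delta_{bb'}$ when $y=y'$ and $m^{n-2}$ when $y\neq y'$. Writing $J_k$ for the $k\times k$ all-ones matrix, this yields
\[
F^T F = m^{n-1} I_{nm} - m^{n-2}(I_n\otimes J_m) + m^{n-2} J_{nm}.
\]
The remaining ingredients are equally direct: each column of $F$ sums to $m^{n-1}$, so $F^T\textbf{1}(m^n,mn) = \textbf{1}(m^n,mn)^T F = m^{n-1} J_{nm}$, and $\textbf{1}(m^n,mn)^T\textbf{1}(m^n,mn) = m^n J_{nm}$. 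Expanding $H^T H = (F-a_{mn}\textbf{1}(m^n,mn))^T(F-a_{mn}\textbf{1}(m^n,mn))/m^{n-1}$ and collecting coefficients of $J_{nm}$ then produces
\[
H^T H = Q + c(a_{mn})\, J_{nm},\qquad c(a):=\tfrac{1}{m} - 2a + a^2 m,
\]
where $Q := I_{nm} - \tfrac{1}{m}(I_n\otimes J_m)$.

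I would then recognize $Q$ as the orthogonal projection onto $\mathbb{R}^n\otimes\{v\in\mathbb{R}^m:\sum_i v_i = 0\}$; in particular $\textbf{1}_{nm}\in\ker Q$, so $Q J_{nm} = J_{nm} Q = 0$. Combined with $J_{nm}^2 = nm\, J_{nm}$, the equation $(H^T H)^2 = H^T H$ collapses to $nm\, c(a)^2 = c(a)$, whose nontrivial root is $c(a_{mn}) = 1/(nm)$. Solving $c(a) = 1/(nm)$ gives $(am-1)^2 = 1/n$, and the minus-sign root is exactly the stated $a_{mn} = \frac{1}{m}(1-1/\sqrt n)$. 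With this choice, $H^T H = Q + \tfrac{1}{nm} J_{nm}$ is a sum of two orthogonal projections with mutually orthogonal ranges, hence itself a projection.

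The only real obstacle is organizational rather than conceptual: keeping the tensor-product/block indexing consistent while reducing $F^T F$ and $\textbf{1}(m^n,mn)^T F$, and checking that the scalar in the statement is the relevant root of the quadratic (as opposed to the other admissible root $\frac{1}{m}(1+1/\sqrt n)$ or the solution $c(a) = 0$, which would correspond to a partial isometry of smaller rank).
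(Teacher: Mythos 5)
Your proposal is correct, and I verified the key computations: the combinatorial evaluation $F^TF=m^{n-1}I_{nm}-m^{n-2}(I_n\otimes J_m)+m^{n-2}J_{nm}$, the column sums of $F$, the resulting closed form $H^TH=Q+c(a)J_{nm}$ with $c(a)=(1-am)^2/m$, and the root selection $c(a_{mn})=1/(nm)$ all check out; the final matrix agrees entry-by-entry with what one gets by direct computation. Your route differs from the paper's in how it certifies that $H^TH$ is a projection. The paper computes the entries of $H^TH$ directly (exploiting that $H$ has only two distinct entry values), recognizes the block structure with circulant diagonal blocks, produces $1+n(m-1)$ explicit eigenvectors of eigenvalue $1$ (the all-ones vector via row-stochasticity, plus vectors supported on single blocks orthogonal to $\textbf{1}_m$), and then kills the remaining eigenvalues by a trace argument. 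You instead write $H^TH=Q+\tfrac{1}{nm}J_{nm}$ as a sum of two orthogonal projections with mutually orthogonal ranges, so idempotence is immediate and no eigenvector hunt or trace count is needed. The two decompositions describe the same matrix ($Q+\tfrac{1}{nm}J_{nm}$ reproduces the paper's entries $a$, $b$ and $1/(mn)$ exactly), but your version is more structural: it explains \emph{why} $a_{mn}$ takes its particular value as a root of the quadratic $c(a)=1/(nm)$, and it directly exhibits the rank $1+n(m-1)$ of the isometry, whereas the paper simply verifies that the given constant works. The paper's argument has the mild advantage of never needing the closed form of $F^TF$; yours generalizes more transparently (e.g., it makes clear that the other root $\tfrac{1}{m}(1+1/\sqrt{n})$ would also yield a partial isometry, and that $c(a)=0$ gives one of smaller rank).
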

\begin{proof}
In order to prove the claim we show that all eigenvalues of $H^T H$ are either 1 or 0. We will denote by $h_i$ the columns of the matrix $H$, which means that the index takes values $i=(y,b)\in\y\times\b$. Consequently, we will say that $i$ and $j$ belong to the same input if $i=(y,b)$ and $j=(y,b')$ for some $y\in\y$. Notice that all entries of $H$ are either $(1-a_{mn})/\sqrt{m^{n-1}}$ or $-a_{mn}/\sqrt{m^{n-1}}$ and, therefore,
\begin{widetext}
\begin{equation}
h_i^Th_i=\frac{1}{m^{n-1}}\left[m^{n-1}(1-a_{mn})^2+(m-1)m^{n-1}a_{mn}^2\right]=1-\frac{1}{m}+\frac{1}{mn}:=a,
\end{equation}
and, if $i\neq j$,
\begin{equation}
h_i^Th_j=\frac{1}{m^{n-1}}\left[-2m^{n-1}(1-a_{mn})a_{mn}+(m-2)m^{n-1}a_{mn}^2\right]=-\frac{1}{m}+\frac{1}{mn}:=b
\end{equation}
if $i$ and $j$ belong to the same input while otherwise we have that
\begin{equation}
h_i^Th_j=\frac{1}{m^{n-1}}\left[m^{n-2}(1-a_{mn})^2+(m-1)^2m^{n-2}a_{mn}^2-2(m-1)m^{n-2}(1-a_{mn})a_{mn}\right]=\frac{1}{mn}.
\end{equation}
\end{widetext}
Thus, our $mn\times mn$ matrix is given by
\begin{equation}
H^TH=\left(
       \begin{array}{cccc}
         A & B & \cdots & B \\
         B & A & \ddots & \vdots \\
         \vdots & \ddots & \ddots & B \\
         B & \cdots & B & A \\
       \end{array}
     \right)
\end{equation}
with $m\times m$ blocks $B=\textbf{1}(m,m)/(mn)$ and
\begin{equation}
A=\left(
       \begin{array}{cccc}
         a & b & \cdots & b \\
         b & a & \ddots & \vdots \\
         \vdots & \ddots & \ddots & b \\
         b & \cdots & b & a \\
       \end{array}
     \right).
\end{equation}
Since $H^TH$ happens to be a row stochastic matrix, it follows that $\textbf{1}_{mn}$ is an eigenvector with eigenvalue 1. On the other hand, $A$ is a circulant matrix, so it is easily found that its eigenvalues are $1/n$ and $a-b=1$, the last one having degeneracy equal to $m-1$ and eigenvectors $\{v_i\}$ in the orthogonal complement of the span of $\textbf{1}_m$. This last property implies that the $\{v_i\}$ are in the kernel of $B$ and, therefore, the $\mathbb{R}^{mn}$ vectors
\begin{equation}
\left\{\left(
              \begin{array}{c}
                v_i \\
                \textbf{0}_m \\
                \vdots \\
                \textbf{0}_m \\
              \end{array}
            \right),\left(
              \begin{array}{c}
                \textbf{0}_m \\
                v_i \\
                \vdots \\
                \textbf{0}_m \\
              \end{array}
            \right),\ldots,\left(
              \begin{array}{c}
                \textbf{0}_m \\
                \vdots \\
                \textbf{0}_m \\
                v_i \\
              \end{array}
            \right)\right\}
\end{equation}
are all eigenvectors of $H^TH$ with eigenvalue equal to 1. Thus, altogether, we have seen that that this eigenvalue has degeneracy at least $1+n(m-1)$. However, $\tr(H^TH)=1+n(m-1)$, which implies that all the remaining eigenvalues necessarily must be equal to 0.
\end{proof}

Now, using Eq.\ (\ref{eqlemma3}) we can write $\langle P,G(m,n)\rangle$ in terms of $\langle P,H\rangle$ and $\langle P,\textbf{1}(m^n,mn)\rangle$. Bounding the former with Theorem 2 and using that the latter equals $nm^n$ for every behaviour $\P$, we immediately arrive at the following result.
\begin{theorem}
Every $(m,n)$ QRAC with $\P\in \q_d$ fulfills
\begin{equation}
p\leq\frac{1}{m}+\frac{\sqrt{md}-1}{m\sqrt{n}}.
\end{equation}
\end{theorem}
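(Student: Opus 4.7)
The plan is to rewrite $G(m,n) = F(m,n)/(nm^n)$ as a linear combination of the partial isometry $H$ from Lemma 3 and the all-ones matrix $\textbf{1}(m^n,mn)$, and then bound each piece separately. Solving Eq.\ (\ref{eqlemma3}) for $F(m,n)$ gives
\begin{equation}
G(m,n)=\frac{\sqrt{m^{n-1}}}{nm^n}H+\frac{a_{mn}}{nm^n}\textbf{1}(m^n,mn),
\end{equation}
so $\langle P,G(m,n)\rangle$ splits into two inner products. This is precisely the kind of shift by $\sum_{xy}\alpha_{xy}A_{xy}$ alluded to in the discussion preceding the lemma, since $\textbf{1}(m^n,mn)=\sum_{xy}A_{xy}$.

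For the second piece, $\langle P,\textbf{1}(m^n,mn)\rangle=\sum_{xyb}P(b|xy)=|\x||\y|=nm^n$ for every behaviour, so its contribution to the bound is exactly $a_{mn}=\frac{1}{m}-\frac{1}{m\sqrt{n}}$. For the first piece, I would invoke Theorem 2 together with Lemma 3 ($\|H\|_\infty=1$) to obtain
\begin{equation}
\langle P,H\rangle\leq\sqrt{d|\x||\y|}=\sqrt{d\,nm^n},
\end{equation}
and multiply by the prefactor $\sqrt{m^{n-1}}/(nm^n)$.

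The remaining work is just algebraic simplification: combining the exponents of $m$ and $n$ in $\frac{\sqrt{m^{n-1}}}{nm^n}\sqrt{d\,nm^n}$ collapses the prefactor to $\sqrt{d}/(\sqrt{n}\sqrt{m})=\sqrt{md}/(m\sqrt{n})$, and adding the constant contribution $a_{mn}$ reproduces the claimed bound $\frac{1}{m}+\frac{\sqrt{md}-1}{m\sqrt{n}}$.

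There is essentially no obstacle, since Lemma 3 has already done the heavy lifting of certifying that the shifted and rescaled matrix $H$ is a partial isometry; the only place where one could go wrong is in the bookkeeping of the powers of $m$ and $n$ in the normalization, which is why it is worth carrying out explicitly. The broader conceptual point worth emphasizing in the write-up is that without the shift by $a_{mn}\textbf{1}(m^n,mn)$ one would instead use $\|F(m,n)\|_\infty$, which is larger than $\sqrt{m^{n-1}}$ and would yield a strictly weaker bound; the freedom to optimize over $\{\alpha_{xy}\}$ noted before the lemma is what makes this tighter estimate possible.
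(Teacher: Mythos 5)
Your proposal is correct and is exactly the argument the paper gives (the paper states it in one sentence before the theorem: decompose $G(m,n)$ via Eq.~(\ref{eqlemma3}) into $H$ and $\textbf{1}(m^n,mn)$, bound $\langle P,H\rangle$ with Theorem 2 using $\|H\|_\infty=1$ from the lemma, and use $\langle P,\textbf{1}(m^n,mn)\rangle=nm^n$). Your algebraic bookkeeping of the prefactors is accurate.
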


This bound reduces to that of \cite{ozols} in the case $m=d=2$. This happens to be tight when $n=2,3$ and the corresponding $\q_2$ behaviours (which cannot be in $\c_2$) have the property that $||P||_1$ fulfills Theorem 1 with equality. One should notice, however, that the bounds are in general not sharp, which can be seen in the particular instance $n=2$ and $d=m>2$ since the exact value of $p$ in this case is known \cite{tavakoli,qracmub,qracmub2}. This can also be seen by comparing with the numerical techniques of \cite{num}.

\subsection{Sharpening witnesses based on state discrimination}

Reference \cite{didwpm} has considered the following scenario: A receives $x\in\x=\{1,\ldots,N\}$ and B $(y,z)\in\x\times\x$ ($y<z$) with the promise that either $x=y$ or $x=z$. The goal of B is to identify which of the two possibilities actually occurred with his output $b\in\{-1,1\}$ (thus $|\x|=N$, $|\y|=N(N-1)/2$ and $|\b|=2$). The aforementioned reference has provided optimal bounds for the performance in this game both in $\c_d$ and $\q_d$ through non-linear ($W_N=\sum_{y<z}(P(1|x=y)-P(1|x=z))^2$) and linear ($V_N=\sum_{y<z}P(1|x=y)-P(1|x=z)$) witnesses. Interestingly, for a fixed value of $d$ there can exist gaps between the classical and quantum bounds; however, this is not the case if $N$ is a multiple of $d$. The optimal quantum strategy consists in sending the states
\begin{equation}\label{states}
|\psi_x\rangle=\frac{1}{\sqrt{d}}\sum_{k=0}^{d-1}\exp\left(i\frac{2\pi kx}{N}\right)|k\rangle
\end{equation}
and B implementing the Helstrom measurement that optimally discriminates between $|\psi_y\rangle$ and $|\psi_z\rangle$ \cite{didwpm,helmstrom}. In the following we show that for $d=2$, the corresponding behaviour is such that $||P||_1=\sqrt{2|\x||\y|}$ $\forall N$. This not only provides another example where Theorem 1 is tight on quantum but non-classical behaviours but, more interestingly, one can then use this insight to construct an alternative witness to $W_N$ and $V_N$ that allows to amplify the gap between the classical and quantum bounds. The general idea on how to construct the witness is very simple. If $P$ has singular value decomposition (SVD) $P=U\Sigma V^T$, then $\langle P,G\rangle=||P||_1$ if $G=UV^T$. Thus, if $\P\in\q_d$ is such that $||P||_1=\sqrt{d|\x||\y|}$, then, by Theorem 2, the witness $G$ constructed following the SVD prescription is optimal for this behaviour. We illustrate this for the behaviour discussed above in the extreme case of $N$ even where the witnesses $W_N$ and $V_N$ cannot discriminate between $\c_2$ and $\q_2$.

After some algebra one finds that in the case $d=2$ the above quantum strategy leads to the behaviour
\begin{equation}\label{optqbeh}
P(b=\pm1|xyz)=\frac{1}{2}\left(1\mp\sin\left[\frac{\pi}{N}(2x-y-z)\right]\right).
\end{equation}
Notice that here we are ignoring the promise that either $x=y$ or $x=z$ and $P\in\mathbb{R}^{N\times N(N-1)}$. In order to compute $||P||_1$, we compute the eigenvalues of $PP^T\in\mathbb{R}^{N\times N}$. Using the above equation we find that the entries of this matrix are given by
\begin{align}\label{calc}
(PP^T)_{xx'}&=\frac{1}{2}\sum_{y<z}\left[1+\frac{1}{2}(\cos\theta_{x-x'}-\cos\theta_{x+x'-y-z})\right]\nonumber\\
&=\frac{N(N-1)}{4}\left(1+\frac{\cos\theta_{x-x'}}{2}\right),
\end{align}
where we have used the shorthand $\theta_j=2\pi j/N$. Since $(PP^T)_{xx'}$ only depends on $|x-x'|$, the matrix is circulant and its eigenvalues $j=0,\ldots,N-1$ are given by
\begin{align}
\lambda_j&=\sum_{k=0}^{N-1}(PP^T)_{1k}\exp\left(i\frac{2\pi jk}{N}\right)=\frac{N(N-1)}{4}\nonumber\\
&\times\left[\sum_{k=0}^{N-1}\exp\left(i\frac{2\pi jk}{N}\right)
+\frac{1}{2}\sum_{k=0}^{N-1}\cos\theta_k\exp\left(i\frac{2\pi jk}{N}\right)\right]\nonumber\\
&=\frac{N(N-1)}{4}\left[N\delta_{j0}+\frac{N}{4}(\delta_{j1}+\delta_{j,N-1})\right].
\end{align}
Thus, $P$ has rank 3 $\forall N$ and its non-zero singular values are $N\sqrt{N-1}/2$, $N\sqrt{N-1}/4$ and $N\sqrt{N-1}/4$ amounting to $||P||_1=N\sqrt{N-1}=\sqrt{2|\x||\y|}$.

\begin{theorem}
In the above scenario the witness
\begin{equation}\label{witness}
G(b=\pm1|xyz)=\frac{2}{N\sqrt{N-1}}\left(\frac{1}{2}\mp\sin\left[\frac{\pi}{N}(2x-y-z)\right]\right)
\end{equation}
is such that $\forall\P\in\q_2$
\begin{equation}
\langle P,G\rangle\leq B_Q=N\sqrt{N-1}
\end{equation}
with equality attained by the behaviour given in Eq.\ (\ref{optqbeh}). On the other hand, $\forall\P\in\c_2$ and for even $N$ it holds that
\begin{align}
&\langle P,G\rangle\leq B_C=\frac{2}{N\sqrt{N-1}}\nonumber\\
&\times\left(\frac{N^2(N-1)}{4}+\frac{2}{\sin(\pi/N)}\sum_{y<z}\left|\cos\left[\frac{\pi}{N}(1+y+z)\right]\right|\right),\label{bc}
\end{align}
with equality attained by a deterministic behaviour.
\end{theorem}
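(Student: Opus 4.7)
The plan is to prove the three assertions in order. For the quantum upper bound, I would apply Theorem 2 in the scenario $(|\x|,|\y|,|\b|) = (N, N(N-1)/2, 2)$ with $d=2$, which reduces the claim to $\|G\|_\infty = 1$. I would decompose $G = \alpha J + \beta T$, where $J$ is the $N \times N(N-1)$ all-ones matrix, $T[x,(yz,b)] = -b\sin[\pi(2x-y-z)/N]$, $\alpha = 1/(N\sqrt{N-1})$ and $\beta = 2/(N\sqrt{N-1})$. Since $T[x,(yz,+)] + T[x,(yz,-)] = 0$, the row sums over $b$ vanish, giving $JT^T = TJ^T = 0$ and hence $GG^T = \alpha^2 JJ^T + \beta^2 TT^T$. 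Because the behaviour in Eq.~(\ref{optqbeh}) satisfies $P = \tfrac12 J + \tfrac12 T$ and its eigenstructure was already worked out in the excerpt, one can read off the three nonzero eigenvalues of $TT^T$: zero on $\mathbf{1}_N$, since the $\mathbf{1}_N$-eigenvalue $N^2(N-1)/4$ of $PP^T$ is fully provided by $\tfrac14 JJ^T$, and $N^2(N-1)/4$ on each of the two real Fourier modes at $j = \pm 1$, since those two $PP^T$-eigenvalues $N^2(N-1)/16$ come entirely from $\tfrac14 TT^T$. Substituting yields three eigenvalues of $GG^T$ all equal to $1$ (from $\alpha^2 N^2(N-1) = 1$ and $\beta^2 N^2(N-1)/4 = 1$) and zeros elsewhere, so $\|G\|_\infty = 1$.

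Attainment follows by plugging Eq.~(\ref{optqbeh}) directly into $\langle P,G\rangle$: after summing over $b$ the cross terms collapse to $\sum_{x,y<z}\sin^2[\pi(2x-y-z)/N]$, already evaluated to $N^2(N-1)/4$ for $N>2$ in the excerpt, and the remaining constant pieces combine to give $\langle P,G\rangle = N\sqrt{N-1}$. This is also consistent with the structural SVD observation made in the paragraph preceding Eq.~(\ref{optqbeh}).

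For the classical bound, convexity of $\c_2$ reduces the problem to deterministic vertices, each parameterised by a partition $\x = A \sqcup B$ and by values $\alpha_{yz},\beta_{yz} \in \{\pm 1\}$. A direct substitution gives
\begin{equation*}
\langle P,G\rangle = \frac{2}{N\sqrt{N-1}}\left[\frac{N^2(N-1)}{4} - \sum_{y<z}\bigl(\alpha_{yz}S_A(y,z) + \beta_{yz}S_B(y,z)\bigr)\right],
\end{equation*}
with $S_A(y,z) = \sum_{x\in A}\sin[\pi(2x-y-z)/N]$ and likewise $S_B$. Because $\sum_x \sin[\pi(2x-y-z)/N] = 0$, we have $S_B = -S_A$, so the optimal choice of $\alpha_{yz},\beta_{yz}$ contributes $2|S_A(y,z)|$ per $(y,z)$. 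For the cyclic half-circle partition $A^* = \{0,1,\ldots,N/2-1\} \bmod N$, the standard geometric-series identity yields $S_{A^*}(y,z) = \cos[\pi(y+z+1)/N]/\sin(\pi/N)$, and substituting back produces exactly $B_C$, proving attainment by this deterministic behaviour.

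The main obstacle is then to show optimality of $A^*$ among all $2^{N-1}$ partitions. Letting $\tilde\chi_x \in \{\pm 1\}$ indicate membership in $A$ and setting $u_{\tilde\chi} = \sum_x\tilde\chi_x e^{i2\pi x/N}$, one finds that the quantity to maximise becomes
\begin{equation*}
2\sum_{y<z}|S_A(y,z)| = |u_{\tilde\chi}|\sum_{y<z}\bigl|\sin(\arg u_{\tilde\chi}-\pi(y+z)/N)\bigr|.
\end{equation*}
The half-circle $\tilde\chi$ simultaneously saturates the Fej\'er-type bound $|u_{\tilde\chi}| \le 2/\sin(\pi/N)$ and places $\arg u_{\tilde\chi}$ centrally within the multiset of angles $\{\pi(y+z)/N\}$, jointly maximising the two factors. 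Using the cyclic $\mathbb{Z}_N$ symmetry of the problem to reduce the search to a one-parameter family of translates of $A^*$, a case comparison then pins down $A^*$ (up to equivalence) as the global maximiser.
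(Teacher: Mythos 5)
Your quantum part is correct and is essentially the paper's own argument: the paper likewise shows that $GG^T$ is a rank-3 matrix with all nonzero eigenvalues equal to $1$ (your decomposition $G=\alpha J+\beta T$ with $JT^T=0$ merely reorganizes the same computation based on Eq.~(\ref{calc})), invokes Theorem 2, and checks attainment by substituting Eq.~(\ref{optqbeh}). The classical part also starts exactly as in the paper: reduction to deterministic vertices, elimination of Bob's signs via $S_B=-S_A$ to get $\langle P,G\rangle=\tfrac{2}{N\sqrt{N-1}}\bigl[\tfrac{N^2(N-1)}{4}+2\sum_{y<z}|S_A(y,z)|\bigr]$, and the geometric-series evaluation for the half-circle partition, which reproduces $B_C$.

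The gap is in the final step: optimality of the half-circle among all $2^{N-1}$ partitions. Your observation that the objective depends on $A$ only through the single Fourier coefficient $u_{\tilde\chi}$ is genuinely valuable, but the sentence ``jointly maximising the two factors'' is an assertion, not an argument: $|u_{\tilde\chi}|$ and $\sum_{y<z}|\sin(\arg u_{\tilde\chi}-\pi(y+z)/N)|$ are a priori maximized by different configurations, and a partition with smaller modulus but better-placed argument could win. Indeed, the consecutive sets of size $N/2+1$ have $|u|=2\cos(\pi/N)/\sin(\pi/N)<2/\sin(\pi/N)$ but argument exactly $\pi/2$, and ruling them out is precisely the nontrivial comparison the paper carries out between its candidates $k=N/2-1$ and $k=N/2$ using Eqs.~(\ref{eqsin})--(\ref{eqcos}); your ``case comparison'' never performs it. The fix along your own lines is short: $2\sum_{y<z}|S_A(y,z)|=\sum_{y<z}\bigl|\mathrm{Im}\bigl(e^{-i\pi(y+z)/N}u_{\tilde\chi}\bigr)\bigr|$ is a sum of absolute values of $\mathbb{R}$-linear functionals of $u_{\tilde\chi}\in\mathbb{C}\cong\mathbb{R}^2$, hence convex and positively homogeneous, so its maximum over $\mathrm{conv}\{u_{\tilde\chi}\}$ is attained at a vertex; every vertex uniquely maximizes some $\mathrm{Re}(e^{-i\phi}\,\cdot\,)$, i.e.\ has $\tilde\chi_x=\mathrm{sign}\cos(2\pi x/N-\phi)$, which for even $N$ and generic $\phi$ is a balanced consecutive half-circle, and all of these give the same value by Eq.~(\ref{eqsin}). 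It is worth noting that the paper itself justifies the reduction to consecutive sets only with an informal ``most signs agree'' remark before its $(j,k)$ optimization, so your route, once completed as above, would actually be more rigorous than the published proof rather than less; but as written it does not yet establish $B_C$ as an upper bound.
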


Figure 1 plots the ratio $B_C/B_Q$ for different values of $N$. It can be readily seen therein that this is always smaller than 1, quickly approaching its asymptotic value $1/2+4/\pi^2\simeq0.9053$.

\begin{figure}[h]
\includegraphics[scale=0.4]{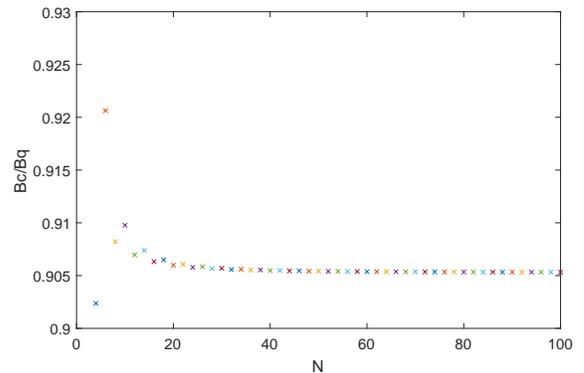}
\caption{Ratio $B_C/B_Q$ for the witness given in Eq.\ (\ref{witness}).}
\end{figure}

\begin{proof}
Proceeding as in Eq.\ (\ref{calc}) on, one finds that $GG^T$ is also a rank 3 matrix with all non-zero eigenvalues equal to 1. Thus, $||G||_\infty=1$ and Theorem 2 gives that $\langle P,G\rangle\leq B_Q$ $\forall\P\in\q_2$. An analogous calculation shows that $\langle P,G\rangle=\tr(PG^T)=B_Q$ for the $\q_2$ behaviour given by Eq.\ (\ref{optqbeh}). Alternatively, the quantum part of the theorem can also be proven by noticing that $G=UV^T$ if the behaviour (\ref{optqbeh}) has the reduced SVD $P=U\Sigma V^T$, i.e.\ $\Sigma=diag(N\sqrt{N-1}/2, N\sqrt{N-1}/4, N\sqrt{N-1}/4)$ and $U$ and $V$ respectively the corresponding $N\times3$ and $N(N-1)\times3$ partial isometries.

It remains to obtain the classical bound $B_C$ for behaviours in $\c_2$. Due to linearity, it must be attained by a deterministic behaviour, i.e.\ such that for every $x,y,z$, $P(b|x,y,z)$ equals 0 or 1 depending on whether $b=\pm1$. Obviously, the best possible strategy is to assign $P(1|xyz)=1$ if $\sin\left[\frac{\pi}{N}(2x-y-z)\right]<0$ and $P(1|xyz)=0$ otherwise. However, Bob does not know $x$ but $m(x)$, which can only take two values (say 0 and 1) given that $\P\in\c_2$. Thus, the best Bob can do is to compute $\sum_{x:m(x)=0}\sin\left[\frac{\pi}{N}(2x-y-z)\right]$ and check the sign of this expression for his inputs $(y,z)$. Since $\sum_{x=1}^N\sin\left[\frac{\pi}{N}(2x-y-z)\right]=0$, the optimal value for a given coding function $m$ is then given by
\begin{align}
&\langle P,G\rangle=\frac{2}{N\sqrt{N-1}}\nonumber\\
&\times\left(\sum_{x,y<z}\frac{1}{2}+2\sum_{y<z}\left|\sum_{x:m(x)=0}\sin\left[\frac{\pi}{N}(2x-y-z)\right]\right|\right).
\end{align}
Noticing now that the best coding function A and B can agree on is that for which most $x$ with the same image lead to $\sin\left[\frac{\pi}{N}(2x-y-z)\right]$ having the same sign for most pairs $(y,z)$, it follows that the optimal strategy corresponds to assigning the same value under $m$ to a consecutive set of elements in $\x$. Thus,
\begin{equation}
B_C=\frac{2}{N\sqrt{N-1}}\left(\frac{N^2(N-1)}{4}+2\max_{j,k}\sum_{y<z}|S_{j,k}|\right),
\end{equation}
where
\begin{align}
S_{j,k}&=\sum_{x=j}^{j+k}\sin\left[\frac{\pi}{N}(2x-y-z)\right]\nonumber\\
&=\frac{\sin\left[\frac{\pi}{N}(2j+k-y-z)\right]\sin\left[\frac{\pi}{N}(k+1)\right]}{\sin(\pi/N)}.\label{maximization}
\end{align}
It is not difficult to show that $\forall j\in\mathbb{Z}$ it holds that
\begin{align}
&\sum_{y=1}^{N-1}\sum_{z=y+1}^N\left|\sin\left[\frac{\pi}{N}(2j+k-y-z)\right]\right|\nonumber\\
&=\sum_{y=1}^{N-1}\sum_{z=y+1}^N\left|\sin\left[\frac{\pi}{N}(k-y-z)\right]\right|,\label{eqsin}\\
&\sum_{y=1}^{N-1}\sum_{z=y+1}^N\left|\cos\left[\frac{\pi}{N}(2j+k-y-z)\right]\right|\nonumber\\
&=\sum_{y=1}^{N-1}\sum_{z=y+1}^N\left|\cos\left[\frac{\pi}{N}(k-y-z)\right]\right|,\label{eqcos}
\end{align}
which we will use repeatedly in the following. Equation (\ref{eqsin}) implies that the value of $j$ is irrelevant in our maximization and, thus, we can write
\begin{equation}
\max_{j,k}\sum_{y<z}|S_{j,k}|=\max_{k}\sum_{y<z}|S_{0,k}|.
\end{equation}
Furthermore, using again Eq.\ (\ref{eqsin}) and Eq.\ (\ref{maximization}) (and taking into account that we are considering $N$ to be even) we can conclude that the above maximum must occur at either $k=N/2-1$ or $k=N/2$. However,
\begin{align}
&\sin(\pi/N)\sum_{y<z}|S_{0,N/2}|\nonumber\\
&=\frac{1}{2}\sum_{y<z}\left|\cos\left[\frac{\pi}{N}(1+y+z)\right]+\cos\left[\frac{\pi}{N}(1-y-z)\right]\right|\nonumber\\
&\leq\frac{1}{2}\sum_{y<z}\left(\left|\cos\left[\frac{\pi}{N}(1+y+z)\right]\right|+\left|\cos\left[\frac{\pi}{N}(1-y-z)\right]\right|\right)\nonumber\\
&=\sum_{y<z}\left|\cos\left[\frac{\pi}{N}(1+y+z)\right]\right|=\sin(\pi/N)\sum_{y<z}|S_{0,N/2-1}|,
\end{align}
where to arrive at the last line we have used Eq.\ (\ref{eqcos}). Hence, the maximum occurs when $k=N/2-1$, i.e.\ the optimal coding function A and B can agree on is one that assigns the same value to a subset of $N/2$ consecutive elements in $\x$. This proves Eq.\ (\ref{bc}).
\end{proof}

As discussed before, the witness $G$ of Theorem 5 can be changed to a witness of the form $G+\sum_{xyz}\alpha_{xyz}A_{xyz}$ keeping track of the corresponding classical and quantum bounds. It might be interesting to notice that this allows to reinterpret its value as the probability of computing some distributed function $f(x,y,z)$. One then has that with probability
\begin{equation}
\pi(x,y,z)=\frac{|\sin\left[\frac{\pi}{N}(2x-y-z)\right]|}{\sum_{x,y<z}|\sin\left[\frac{\pi}{N}(2x-y-z)\right]|}
\end{equation}
A and B receive the inputs $x$ and $(y,z)$ respectively with the goal that Bob answers $f(x,y,z)=1$ ($f(x,y,z)=-1$) whenever $\sin\left[\frac{\pi}{N}(2x-y-z)\right]<0$ ($\sin\left[\frac{\pi}{N}(2x-y-z)\right]>0$). The bounds $B_C$ and $B_Q$ can be changed correspondingly to obtain the maximal average success probability in each setting, which will then be larger in the quantum case.

Here, we have illustrated how the witnesses of \cite{didwpm} can be improved to distinguish classical and quantum bidimensional behaviours. However, different dimensions can be assessed. First, one can directly apply Lemma 2 to conclude that for the witness of Eq.\ (\ref{witness}) it holds
\begin{equation}
\langle P,G\rangle\leq N\sqrt{\frac{d(N-1)}{2}}
\end{equation}
for every $\P\in\q_d$. Another strategy is to consider a different witness taking the one corresponding to the (reduced) SVD of the $\q_d$ behaviour given by Helstrom measurements on the states (\ref{states}) as we did in the $d=2$ case, for which the same bound as above holds.

\section{Conclusion}

Previous works obtaining bounds for DIDW in the prepare-and-measure scenario had relied on particular constructions of dimension witnesses or considered arbitrary behaviours under the condition that shared randomness among parties is not available. In this work we have obtained general and explicit lower bounds on the dimension of arbitrary quantum behaviours dropping this assumption, which are based on standard techniques from matrix theory. Although the bounds are not always tight and in general perform worse than the numerical techniques of Ref.\ \cite{num}, we expect that their simple and easy-to-use form make them helpful for further investigations in this context, particularly when a numerical approach is not feasible due to the analytical nature of the problem at hand or because it is computationally too demanding. In fact, we have provided two applications of our result. First, we have proved that using our techniques the probability of success of distributed computational tasks can be upper bounded in general as a function of the dimension of the message in a paradigmatic example such as random access codes. Second, we have shown that our construction allows one to derive powerful dimension witnesses for given behaviours. In particular, when the behaviour $\P$ is at the boundary of $\q_d$ and fulfills the condition of Theorem 1 with equality, the SVD of the matrix associated to $\P$ yields immediately an optimal dimension witness for it. As an example, we have used this to improve the constructions of dimension witnesses given in \cite{didwpm}. For the future we hope that this insight makes it possible to find systematically adequate dimension witnesses in relevant physical situations, to improve known semi-device-independent protocols and, in general, to understand better the mathematical structure of fixed-dimensional behaviours. It might be also interesting to apply these techniques to bound the success probability as a function of the allowed quantum communication for other distributed tasks of interest.

This research was funded by the Spanish MINECO through grants MTM2017-84098-P and MTM2017-88385-P and by the Comunidad de Madrid through grant QUITEMAD+CM S2013/ICE-2801.


\begin{thebibliography}{widest-label}

\bibitem{reviewnl} N. Brunner, D. Cavalcanti, S. Pironio, V. Scarani, and S. Wehner, Rev. Mod. Phys. \textbf{86}, 419 (2014).

\bibitem{bellwit} N. Brunner, S. Pironio, A. Acin, N. Gisin, A. A. M\'ethot, and V. Scarani, Phys. Rev. Lett. \textbf{100}, 210503 (2008).

\bibitem{gallego} R. Gallego, N. Brunner, C. Hadley, and A. Acin,  Phys. Rev. Lett. \textbf{105}, 230501 (2010).

\bibitem{semi} M. Pawlowski and N. Brunner, Phys. Rev. A \textbf{84}, 010302(R)
(2011); H.-W. Li et al., Phys. Rev. A \textbf{84}, 034301 (2011).

\bibitem{reviewcc} See e.\ g.\ the review H. Buhrman, R. Cleve, S. Massar, and R. de Wolf, Rev. Mod. Phys. \textbf{82}, 665 (2010).

\bibitem{experiments} M. Hendrych, R. Gallego, M. Mi\v{c}uda, N. Brunner, A.
Acin, and J. P. Torres, Nature Phys. \textbf{8}, 588 (2012); J. Ahrens, P. Badziag, A. Cabello, and M. Bourennane, Nature Phys. \textbf{8}, 592 (2012).

\bibitem{dallarno} M. Dall'Arno, E. Passaro, R. Gallego, and A. Acin, Phys. Rev. A \textbf{86}, 042312 (2012).

\bibitem{didwpm} N. Brunner, M. Navascu\'es, and T. V\'ertesi, Phys. Rev. Lett. \textbf{110}, 150501 (2013).

\bibitem{didwseveral} J. Ahrens, P. Badziag, M. Pawlowski, M. Zukowski, and M. Bourennane, Phys. Rev. Lett. \textbf{112}, 140401 (2014); P. Mironowicz, H.-W. Li, and M. Pawlowski, Phys. Rev. A \textbf{90}, 022322 (2014); M. Czechlewski, D. Saha, A. Tavakoli, and M. Pawlowski, Phys. Rev. A \textbf{98}, 062305 (2018).

\bibitem{num} M. Navascu\'es, G. de la Torre, and T. V\'ertesi, Phys. Rev. X \textbf{4}, 011011 (2014); M. Navascu\'es and T. V\'ertesi, Phys. Rev. Lett. \textbf{115}, 020501 (2015); M. Navascu\'es, A. Feix, M. Araujo, and T. V\'ertesi, Phys. Rev. A \textbf{92}, 042117 (2015).

\bibitem{brunner} J. Bowles, M. T. Quintino, and N. Brunner, Phys. Rev. Lett. \textbf{112}, 140407 (2014).

\bibitem{sikora} J. Sikora, A. Varvitsiotis, and Z. Wei, Phys. Rev. A \textbf{94}, 042125 (2016).

\bibitem{yo} J. I. de Vicente, Phys. Rev. A \textbf{95}, 012340 (2017).

\bibitem{bhatia} R. Bhatia, \textit{Matrix Analysis} (Springer-Verlag, New York, 1997).

\bibitem{HJ2} R. A. Horn and C. R. Johnson, \textit{Topics in Matrix Analysis} (Cambridge University Press, 1991).

\bibitem{rac} A. Ambainis, A. Nayak, A. Ta-Shma, and U. Vazirani, \textit{Proceedings
of the 31st Annual ACM Symposium on Theory of Computing
(STOC’99)}, 376 (ACM, New York, 1999); A. Nayak, \textit{Proceedings of the 40th IEEE Symposium on Foundations
of Computer Science (FOCS’99)}, 369 (IEEE Computer Society, Washington DC, 1999).

\bibitem{ozols} A. Ambainis, D. Leung, L. Mancinska, and M. Ozols, 	arXiv:0810.2937 (2008).

\bibitem{tavakoli} A. Tavakoli, A. Hameedi, B. Marques, and M. Bourennane, Phys. Rev. Lett. \textbf{114}, 170502 (2015).

\bibitem{qracmub} E. A. Aguilar, J. J. Borkala, P. Mironowicz, and M. Pawlowski, Phys. Rev. Lett. \textbf{121}, 050501 (2018).

\bibitem{qracmub2} M. Farkas and J. Kaniewski, arXiv:1803.00363 (2018).

\bibitem{helmstrom} C. W. Helstrom, \textit{Quantum Detection and Estimation
Theory} (Academic, New York, 1976).

\end{thebibliography}
\end{document}